\DeclareTextSymbolDefault{\textquotedbl}{T1}
\theoremstyle{plain}
\newtheorem{thm}{\protect\theoremname}
\theoremstyle{definition}
\newtheorem{example}[thm]{\protect\examplename}
\theoremstyle{plain}
\newtheorem{lem}[thm]{\protect\lemmaname}
\theoremstyle{remark}
\newtheorem{rem}[thm]{\protect\remarkname}
\numberwithin{equation}{section}
\providecommand{\examplename}{Example}
\providecommand{\lemmaname}{Lemma}
\providecommand{\remarkname}{Remark}
\providecommand{\theoremname}{Theorem}
\begin{document}
\title{Heat Kernel Estimates for Fractional Heat Equation}
\author{\textbf{Anatoly N. Kochubei}\\
 Institute of Mathematics,\\
 National Academy of Sciences of Ukraine, \\
 Tereshchenkivska 3, \\
 Kyiv, 01024 Ukraine\\
 Email: kochubei@imath.kiev.ua\and\textbf{Yuri Kondratiev}\\
 Department of Mathematics, University of Bielefeld, \\
 D-33615 Bielefeld, Germany,\\
 Dragomanov University, Kiev, Ukraine\\
 Email: kondrat@mathematik.uni-bielefeld.de\and\textbf{Jos{\'e}
Lu{\'i}s da Silva},\\
 CIMA, University of Madeira, Campus da Penteada,\\
 9020-105 Funchal, Portugal.\\
 Email: joses@staff.uma.pt}
\date{\today}

\maketitle
 
\begin{abstract}
We study the long-time behavior of the Cesaro means of fundamental
solutions for fractional evolution equations corresponding to random
time changes in the Brownian motion and other Markov processes. We
consider both stable subordinators leading to equations with the Caputo-Djrbashian
fractional derivative and more general cases corresponding to differential-convolution
operators, in particular, distributed order derivatives. 
\end{abstract}
\tableofcontents{}

\section{Introduction}

Let $\{X_{t},t\geq0;P_{x},x\in E\}$ be a strong Markov process in
a phase space $\mathbb{R}^{d}$. Denote by $T_{t}$ its transition
semigroup (in an appropriate Banach space) and by $L$ the generator
of this semigroup. Let $S_{t},t\geq0$ be a subordinator (i.e., a
non-decreasing real-valued L{\'e}vy process) with $S_{0}=0$ and
the Laplace exponent $\Phi$: 
\[
\mathbf{E}[e^{-\lambda S_{t}}]=e^{-t\Phi(\lambda)}\;\;t,\lambda>0.
\]
We assume that $S_{t}$ is independent of $X_{t}$.

Denote by $E_{t},t>0$ the inverse subordinator, and introduce the
time-changed process $Y_{t}=X_{E_{t}}$. A general aim is to analyze
the properties of $Y_{t}$ depending on the given Markov process $X_{t}$
and the particular choice of subordinator $S_{t}$. There is interest
in this kind of problem in diverse disciplines. In addition to purely
stochastic motivations, a transform of the Markov process $X_{t}$
in the non-Markov one $Y_{t}$ implies the presence of effects in
the corresponding dynamics. This feature is important in a number
of physical models. In particular, progress in the understanding of
this process may lead to the realization of useful models of biological
time in the evolution of species and ecological systems. Currently,
there exist rather complete studies of such problems in the case of
so-called stable subordinators \cite{BM01,MS2004} and in special
examples of initial processes $X_{t}$ (see, e.g., \cite{Magdziarz2015},
\cite{Li2007}, \cite{Mimica2016}).

As a basic characteristic of the new process $Y_{t}$, we may study
the time evolution 
\[
u(t,x)=\mathbf{E}^{x}[f(Y_{t})]
\]
for a given initial data $f$.

As it was pointed out in several works, see e.g. \cite{Toaldo2015},
\cite{Chen2017} and references therein, $u(t,x)$ is the unique strong
solution (in some appropriate sense) to the following Cauchy problem
\begin{equation}
\mathbb{D}_{t}^{(k)}u(t,x)=Lu(t,x)\;\;u(0,x)=f(x).\label{FCP}
\end{equation}
Here we have a generalized fractional derivative (GFD for short) 
\[
\mathbb{D}_{t}^{(k)}\phi(t)=\frac{d}{dt}\int_{0}^{t}k(t-s)(\phi(s)-\phi(0))ds
\]
with a kernel $k$ uniquely defined by $\Phi$.

Let $u_{0}(t,x)$ be the solution to a similar Cauchy problem but
with ordinary time derivative 
\begin{equation}
\frac{\partial}{\partial t}u(t,x)=Lu(t,x)\;\;u(0,x)=f(x).\label{CP}
\end{equation}
In stochastic terminology, it is the solution to the forward Kolmogorov
equation corresponding to the process $X_{t}$. Under quite general
assumptions there is a convenient and essentially obvious relation
between these evolutions that is known as the subordination principle:
\[
u(t,x)=\int_{0}^{\infty}u_{0}(\tau,x)G_{t}(\tau)d\tau,
\]
where $G_{t}(\tau)$ is the density of $E_{t}$.

A similar relation holds for fundamental solutions (or heat kernels
in another terminology) $v(x,t)$ and $v^{E}(x,t)$ of equations (\ref{CP})
and (\ref{FCP}), respectively. For certain classes of \emph{a priori}
bounds for fundamental solutions $v(x,t)$, the properties of the
subordinated kernels were studied in \cite{Chen2018}. The main technical
tool used in this work involves a scaling property assumed for $\Phi$
\cite{Chen2018} that is a global condition on the L{\'e}vy characteristic
$\Phi(\lambda)$. It is nevertheless difficult to give an interpretation
of this scaling assumption in terms of the subordinator.

The aim of the present work is to extend the class of random times
for which one may obtain information about the time asymptotic of
$v^{E}(x,t)$. We consider the following three classes of admissible
kernels $k\in L_{\mathrm{loc}}^{1}(\mathbb{R}_{+})$, characterized
in terms of the Laplace transforms $\mathcal{K}(\lambda)$ as $\lambda\to0$
(i.e., as local conditions): 
\[
\mathcal{K}(\lambda)\sim\lambda^{\theta-1},\quad0<\theta<1.\tag*{(C1)}
\]
\[
\mathcal{K}(\lambda)\sim\lambda^{-1}L\left(\frac{1}{\lambda}\right),\quad L(y):=\mu(0)\log(y)^{-1}.\tag*{(C2)}
\]
\[
\mathcal{K}(\lambda)\sim\lambda^{-1}L\left(\frac{1}{\lambda}\right),\quad L(y):=C\log(y)^{-1-s},\;s>0,\;C>0.\tag*{(C3)}
\]
We would like to emphasize that these classes of kernels leads to
differential-convolution operators, in particular, the Caputo-Djrbashian
fractional derivative (C1) and distributed order derivatives (C2),
(C2). For each kernel of this type, we establish the asymptotic properties
of the subordinated heat kernels from different classes of a priory
bounds. It is important to stress that in working with much more general
random times (i.e., without the scaling property), a price must be
paid for such an extension, namely the replacement of $v^{E}(x,t)$
by its Cesaro mean. This is the key technical observation that underlies
the analysis of several different model situations.

\section{Preliminaries}

\label{sec:Preliminaries}Let $S=\{S(t),\;t\ge0\}$ be a subordinator,
that is a process with stationary and independent non-negative increments
starting from $0$. They form a special class of L{\'e}vy processes
taking values in $[0,\infty)$ and their sample paths are non-decreasing.
In addition we assume that $S$ has no drift (see \cite{Bertoin96}
for more details). The infinite divisibility of the law of $S$ implies
that its Laplace transform can be expressed in the form 
\[
\mathbb{E}(e^{-\lambda S(t)})=e^{-t\Phi(\lambda)},\quad\lambda\ge0,
\]
where $\Phi:[0,\infty)\longrightarrow[0,\infty)$, called the \emph{Laplace
exponent} (or \emph{cumulant}), is a \emph{Bernstein function}. The
associated L{\'e}vy measure $\sigma$ has support in $[0,\infty)$
and fulfills 
\begin{equation}
\int_{(0,\infty)}(1\wedge\tau)\,d\sigma(\tau)<\infty\label{eq:Levy-condition}
\end{equation}
such that the Laplace exponent $\Phi$ can be expressed as 
\begin{equation}
\Phi(\lambda)=\int_{(0,\infty)}(1-e^{-\lambda\tau})\,d\sigma(\tau),\label{eq:Levy-Khintchine}
\end{equation}
which is known as the \emph{L}{\'e}\emph{vy-Khintchine formula} for
the subordinator $S$. In addition we assume that the L{\'e}vy measure
$\sigma$ satisfy 
\begin{equation}
\sigma(0,\infty)=\infty.\label{eq:Levy-massumption}
\end{equation}
For the given L{\'e}vy measure $\sigma$, we define the function
$k$ by 
\begin{equation}
k:(0,\infty)\longrightarrow(0,\infty),\;t\mapsto k(t):=\sigma\big((t,\infty)\big)\label{eq:k}
\end{equation}
and denote its Laplace transform by $\mathcal{K}$; i.e., for any
$\lambda\ge0$ one has 
\begin{equation}
\mathcal{K}(\lambda):=\int_{0}^{\infty}e^{-\lambda t}k(t)\,dt.\label{eq:LT-k}
\end{equation}
We note that by the Fubini theorem, the function $\mathcal{K}$ is
given in terms of the Laplace exponent. Specifically, 
\[
\mathcal{K}(\lambda)=\int_{0}^{\infty}e^{-\lambda t}\int_{0}^{t}\,d\sigma(s)\,dt=\int_{0}^{\infty}\int_{0}^{s}e^{-\lambda t}\,dt\,d\sigma(s)=\frac{1}{\lambda}\Phi(\lambda),
\]
i.e., 
\begin{equation}
\Phi(\lambda)=\lambda\mathcal{K}(\lambda),\quad\forall\lambda\ge0.\label{eq:Laplace-exponent}
\end{equation}
Given the inverse process $E$ of the subordinator $S$, namely 
\begin{equation}
E(t):=\inf\{s\ge0:\;S(s)\ge t\}=\sup\{s\ge0:\;S(t)\le s\},\label{eq:inverse-sub}
\end{equation}
the marginal density of $E(t)$ will be denoted by $G_{t}(\tau)$,
$t,\tau\ge0$, more explicitly 
\[
G_{t}(\tau)\,d\tau=\partial_{\tau}P(E(t)\le\tau)=\partial_{\tau}P(S(\tau)\ge t)=-\partial_{\tau}P(S(\tau)<t).
\]

\begin{example}[$\theta$-stable subordinator and Gamma processes]
\label{exa:alpha-stable1} 

\begin{enumerate}
\item Let $S$ be a $\theta$-stable subordinator $\theta\in(0,1)$\footnote{The restriction $\theta\in(0,1)$ and not $\theta\in(0,2)$ is due
to the requirement \eqref{eq:Levy-condition}. The boundary $\theta=1$
corresponds to a degenerate case since $S(t)=t$.} with Laplace exponent 
\[
\Phi_{\theta}(\lambda)=\lambda^{\theta}=\frac{\theta}{\Gamma(1-\theta)}\int_{0}^{\infty}(1-e^{-\lambda\tau})\tau^{-1-\theta}\,d\tau,
\]
from which it follows that the L{\'e}vy measure $\sigma$ is given
by 
\[
d\sigma(\tau)=\frac{\theta}{\Gamma(1-\theta)}\tau^{-1-\theta}\,d\tau.
\]
We have $\mathcal{K}(\lambda)=\lambda^{\theta-1}$ and $k(t)={t^{-\theta}}/{\Gamma(1-\theta)}$.
The corresponding GFD $\mathbb{D}_{t}^{(k)}$ is the \emph{Caputo-Djrbashian
fractional derivative} $\mathbb{D}_{t}^{(\theta)}$ of order $\theta\in(0,1)$. 
\item The Gamma process $Y^{(a,b)}$ with parameters $a,b>0$ is given by
its Laplace exponent $\Phi_{(a,b)}$ as 
\[
\Phi_{(a,b)}(\lambda)=a\log\left(1+\frac{\lambda}{b}\right)=\int_{0}^{\infty}(1-e^{-\lambda\tau})a\tau^{-1}e^{-b\tau}\,d\tau,
\]
where the second equality is known as the Frullani integral \cite{Reyna1990}.
The corresponding L{\'e}vy measure is given by 
\[
d\sigma(\tau)=a\tau^{-1}e^{-b\tau}\,d\tau.
\]
We have $\mathcal{K}(\lambda)=\lambda^{-1}a\log\left(1+\frac{\lambda}{b}\right)$
and $k(t)=a\Gamma(0,bt)$. The corresponding GFD is given by 
\[
(\mathbb{D}_{t}^{(a,b)}f)(t)=\frac{d}{dt}\int_{0}^{t}\Gamma(0,b(t-s))(f(s)-f(0))\,ds.
\]
\end{enumerate}
\end{example}

An important characteristic of the density $G_{t}(\tau)$ is given
by its Laplace transform. More precisely, does the $\tau$-Laplace
(or $t$-Laplace) transform of $G_{t}(\tau)$ are known for an arbitrary
subordinator? Thus, we would like to compute the following integrals
\[
\int_{0}^{\infty}e^{-\lambda\tau}G_{t}(\tau)\,d\tau\quad\mathrm{or}\quad\int_{0}^{\infty}e^{-\lambda t}G_{t}(\tau)\,dt.
\]
The answer for the $t$-Laplace transform is affirmative and the result
is given in \eqref{eq:LT-G-t-1} below. On the other hand, for the
$\tau$-Laplace transform a partial answer has been given for the
class of $\theta$-stable processes; namely 
\begin{example}[cf.~Prop.~1(a) in \cite{Bingham1971}]
\label{exa:distr-alphastab-E}If $S$ is a $\theta$-stable process,
then the inverse process $E(t)$ has the Mittag-Leffler distribution,
as follows. 
\begin{equation}
\mathbb{E}(e^{-\lambda E(t)})=\sum_{n=0}^{\infty}\frac{(-\lambda t^{\theta})^{n}}{\Gamma(n\theta+1)}=E_{\theta}(-\lambda t^{\theta}).\label{eq:Laplace-density-alpha}
\end{equation}
It follows from the asymptotic behavior of the Mittag-Leffler function
$E_{\theta}$ that 
\[
\mathbb{E}(e^{-\lambda E(t)})\sim\frac{C}{t^{\theta}},\;\mathrm{as}\;t\to\infty.
\]
In addition, using the fact that 
\begin{equation}
E_{\theta}(-x)=\int_{0}^{\infty}e^{-x\tau}M_{\theta}(\tau)\,d\tau,\quad\forall x\ge0,\label{eq:LT-M-Wright}
\end{equation}
where $M_{\theta}$ is the so-called $M$-Wright (cf.\ \cite{Mainardi_Mura_Pagnini_2010}
for more details and properties), it follows that 
\[
\mathbb{E}(e^{-\lambda E(t)})=\int_{0}^{\infty}e^{-\lambda t^{\theta}\tau}M_{\theta}(\tau)\,d\tau=\int_{0}^{\infty}e^{-\lambda\tau}t^{-\theta}M_{\theta}(\tau t^{-\theta})\,d\tau
\]
from which we obtain the density of $E(t)$ explicitly as 
\begin{equation}
G_{t}(\tau)=t^{-\theta}M_{\theta}(\tau t^{-\theta}).\label{eq:Gdensity-alpha}
\end{equation}
On the other hand, for a general subordinator, the following lemma
determines the $t$-Laplace transform of $G_{t}(\tau)$, with $k$
and $\mathcal{K}$ given in \eqref{eq:k} and \eqref{eq:LT-k}, respectively. 
\end{example}

\begin{lem}
\label{lem:t-LT-G}The $t$-Laplace transform of the density $G_{t}(\tau)$
is given by 
\begin{equation}
\int_{0}^{\infty}e^{-\lambda t}G_{t}(\tau)\,dt=\mathcal{K}(\lambda)e^{-\tau\lambda\mathcal{K}(\lambda)}.\label{eq:LT-G-t}
\end{equation}
In addition, the double ($\tau,t$)-Laplace transform of $G_{t}(\tau)$
is given by 
\[
\int_{0}^{\infty}\int_{0}^{\infty}e^{-p\tau}e^{-\lambda t}G_{t}(\tau)\,dt\,d\tau=\frac{\mathcal{K}(\lambda)}{\lambda\mathcal{K}(\lambda)+p}.
\]
\end{lem}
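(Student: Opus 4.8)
The plan is to reduce everything to the elementary Laplace identity $\int_{0}^{\infty}e^{-\lambda t}\mathbf{1}_{\{s<t\}}\,dt=\lambda^{-1}e^{-\lambda s}$ for $s\ge0$, combined with the defining relation $\mathbb{E}[e^{-\lambda S(\tau)}]=e^{-\tau\Phi(\lambda)}$ and the identity $\Phi(\lambda)=\lambda\mathcal{K}(\lambda)$ from \eqref{eq:Laplace-exponent}. Rather than differentiating under the integral sign immediately, I would first work at the level of distribution functions. Fix $\tau_{0}>0$; since $G_{t}(\cdot)$ is the density of $E(t)$ and $P(E(t)\le\tau_{0})=P(S(\tau_{0})\ge t)$, integrating the density gives $\int_{0}^{\tau_{0}}G_{t}(\tau)\,d\tau=P(S(\tau_{0})\ge t)=1-P(S(\tau_{0})<t)$.

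Next I multiply by $e^{-\lambda t}$ and integrate in $t$ over $(0,\infty)$. On the left, since the integrand is nonnegative, Tonelli's theorem lets me exchange the $t$- and $\tau$-integrations, producing $\int_{0}^{\tau_{0}}\bigl(\int_{0}^{\infty}e^{-\lambda t}G_{t}(\tau)\,dt\bigr)\,d\tau$. On the right, $\int_{0}^{\infty}e^{-\lambda t}\,dt=\lambda^{-1}$, while by Tonelli applied to $\mathbf{1}_{\{S(\tau_{0})<t\}}$ one gets $\int_{0}^{\infty}e^{-\lambda t}P(S(\tau_{0})<t)\,dt=\lambda^{-1}\mathbb{E}[e^{-\lambda S(\tau_{0})}]=\lambda^{-1}e^{-\tau_{0}\Phi(\lambda)}$. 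Hence
\[
\int_{0}^{\tau_{0}}\Bigl(\int_{0}^{\infty}e^{-\lambda t}G_{t}(\tau)\,dt\Bigr)\,d\tau=\frac{1}{\lambda}\bigl(1-e^{-\tau_{0}\Phi(\lambda)}\bigr),\qquad\tau_{0}>0.
\]
The right-hand side is a smooth function of $\tau_{0}$, so differentiating in $\tau_{0}$ (the fundamental theorem of calculus on the left, valid at every point where $\tau\mapsto\int_{0}^{\infty}e^{-\lambda t}G_{t}(\tau)\,dt$ is continuous, together with smoothness of the right side to show no exceptional points remain) yields $\int_{0}^{\infty}e^{-\lambda t}G_{t}(\tau)\,dt=\tfrac{\Phi(\lambda)}{\lambda}e^{-\tau\Phi(\lambda)}$, which is exactly \eqref{eq:LT-G-t} after substituting $\Phi(\lambda)=\lambda\mathcal{K}(\lambda)$.

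For the double transform I would simply integrate \eqref{eq:LT-G-t} against $e^{-p\tau}\,d\tau$: because the integrand $e^{-p\tau}e^{-\lambda t}G_{t}(\tau)$ is nonnegative, Tonelli allows computing the double integral iteratively, giving $\int_{0}^{\infty}e^{-p\tau}\bigl(\mathcal{K}(\lambda)e^{-\tau\lambda\mathcal{K}(\lambda)}\bigr)\,d\tau=\mathcal{K}(\lambda)/\bigl(p+\lambda\mathcal{K}(\lambda)\bigr)$, as claimed. The only genuinely delicate point is the differentiation step, i.e.\ knowing that $\tau\mapsto\int_{0}^{\infty}e^{-\lambda t}G_{t}(\tau)\,dt$ is locally integrable and continuous so that its indefinite integral is differentiated back correctly; this is handled by reading off the a.e.\ identity from the FTC and then using the continuity (indeed analyticity) in $\tau_{0}$ of both sides of the integrated equation. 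Everything else is bookkeeping with Tonelli and the exponential integral, and the standing hypothesis $\sigma(0,\infty)=\infty$ enters only to guarantee that $E(t)$ genuinely possesses the density $G_{t}$ used throughout.
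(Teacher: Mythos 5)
Your argument is correct and reaches the stated formulas, but it takes a genuinely different route from the paper. The paper does not work with the distribution function of $E(t)$ at all: it invokes Theorem~3.1 of Meerschaert--Scheffler (reference \cite{Meerschaert2008}) to write the density explicitly as $G_{t}(\tau)=\int_{0}^{t}k(t-s)\,d\eta_{\tau}(s)$, where $\eta_{\tau}$ is the law of $S(\tau)$, and then the $t$-Laplace transform is a one-line Fubini computation exploiting the convolution structure: $\int_{0}^{\infty}e^{-\lambda t}G_{t}(\tau)\,dt=\mathcal{K}(\lambda)\int_{0}^{\infty}e^{-\lambda s}\,d\eta_{\tau}(s)=\mathcal{K}(\lambda)e^{-\tau\Phi(\lambda)}$. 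Your proof instead starts from $P(E(t)\le\tau_{0})=P(S(\tau_{0})\ge t)$, integrates the target identity in $\tau$ over $(0,\tau_{0})$, evaluates both sides via Tonelli, and recovers the integrand by differentiating in $\tau_{0}$. What your route buys is self-containment: it needs no external representation theorem for the density, only the definition of the inverse process and elementary Laplace identities. What it costs is the differentiation step: the fundamental theorem of calculus only gives the identity \eqref{eq:LT-G-t} for almost every $\tau$, and your appeal to ``continuity (indeed analyticity) in $\tau_{0}$ of both sides of the integrated equation'' does not repair this --- smoothness of an antiderivative says nothing about continuity of the (a.e.-defined) integrand, which is what you would need to upgrade to a pointwise statement. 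Since a density is only determined up to null sets, an a.e.\ identity is arguably all one can ask for, but the paper's use of the explicit Meerschaert--Scheffler formula fixes a canonical version of $G_{t}(\tau)$ for which \eqref{eq:LT-G-t} holds at every $\tau$. Note also that the a.e.\ version is entirely sufficient for the double Laplace transform and for every subsequent use of the lemma in Section~\ref{sec:Alternative-Method}, so this is a presentational rather than a substantive defect. Your treatment of the double transform coincides with the paper's.
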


\begin{proof}
For any $\tau\ge0$ let $\eta_{\tau}$ be the distribution of $S(\tau)$,
that is 
\begin{equation}
\mathbb{E}(e^{-\lambda S(\tau)})=e^{-\tau\Phi(\lambda)}=\int_{0}^{\infty}e^{-\lambda s}\,d\eta_{\tau}(s).\label{eq:Laplace-family}
\end{equation}
Defining 
\begin{equation}
g(\lambda,\tau):=\mathcal{K}(\lambda)e^{-\tau\Phi(\lambda)},\quad\tau,\lambda>0\label{eq:g-t}
\end{equation}
under assumption \eqref{eq:Levy-massumption}, for all $t>0$ it follows
from Theorem 3.1 in \cite{Meerschaert2008} that the density $G_{t}(\tau)$
of the random variable $E(t)$ if given by 
\[
G_{t}(\tau)=\int_{0}^{t}k(t-s)\,d\eta_{\tau}(s).
\]
It follows then that 
\begin{equation}
\int_{0}^{\infty}e^{-\lambda t}G_{t}(\tau)\,dt=g(\lambda,\tau)=\mathcal{K}(\lambda)e^{-\tau\Phi(\lambda)}.\label{eq:LT-G-t-1}
\end{equation}
In fact, by the Fubini's theorem we obtain 
\begin{align*}
\int_{0}^{\infty}e^{-\lambda t}G_{t}(\tau)\,dt & =\int_{0}^{\infty}e^{-\lambda t}\int_{0}^{t}k(t-s)\,d\eta_{\tau}(s)\,dt\\
 & =\int_{0}^{\infty}\int_{s}^{\infty}e^{-\lambda t}k(t-s)\,dt\,d\eta_{\tau}(s)\\
 & =\mathcal{K}(\lambda)\int_{0}^{\infty}e^{-\lambda s}\,d\eta_{\tau}(s)\\
 & =g(\lambda,\tau).
\end{align*}
In addition, it follows easily from \eqref{eq:g-t} that 
\[
\int_{0}^{\infty}g(\lambda,\tau)\,d\tau=\frac{1}{\lambda}
\]
so that \eqref{eq:LT-G-t-1} may be written as 
\[
\int_{0}^{\infty}e^{-\lambda t}\,dt\int_{0}^{\infty}G_{t}(\tau)\,d\tau=\frac{1}{\lambda}
\]
which implies that $G_{t}(\tau)$ is a $\tau$-density on $\mathbb{R}_{+}$:
\[
\int_{0}^{\infty}G_{t}(\tau)\,d\tau=1.
\]
Finally, the double $(\tau,t)$-Laplace transform follows from 
\begin{align}
\int_{0}^{\infty}\int_{0}^{\infty}e^{-p\tau}e^{-\lambda t}G_{t}(\tau)\,dt\,d\tau & =\int_{0}^{\infty}e^{-p\tau}g(\lambda,\tau)\,d\tau\nonumber \\
 & =\mathcal{K}(\lambda)\int_{0}^{\infty}e^{-p\tau}e^{-\tau\lambda\mathcal{K}(\lambda)}\,d\tau\nonumber \\
 & =\frac{\mathcal{K}(\lambda)}{\lambda\mathcal{K}(\lambda)+p}.\qedhere\label{eq:G-double-LT}
\end{align}
\end{proof}

\section{Subordinated Heat Kernel}

In this section, we investigate the long-time behavior of the fundamental
solutions for fractional evolution equations corresponding to random
time changes in the Brownian motion by the inverse process $E(t)$.
We consider three classes of time change, namely those corresponding
to the $\theta$-stable subordinator, the distributed order derivative,
and the class of Stieltjes functions. Henceforth $L$ always denotes
a slowly varying function (SVF) at infinity (see for instance \cite{Bingham1987}
and \cite{Schilling12}), while $C$, $C'$ are constants whose values
are unimportant, and which may change from line to line.

Let $v(x,t)$ be the fundamental solution of the heat equation 
\begin{equation}
\begin{cases}
{\displaystyle \frac{\partial u(x,t)}{\partial t}} & =\frac{1}{2}\Delta u(x,t)\\
u(x,0) & =\delta(x),
\end{cases}\label{eq:Heat-Equation}
\end{equation}
where $\Delta$ denotes the Laplacian in $\mathbb{R}^{d}$. It is
well known that the solution $v(x,t)$ of \eqref{eq:Heat-Equation},
called heat kernel (also known as Green function), is given by 
\begin{equation}
v(x,t)=\frac{1}{(2\pi t)^{\nicefrac{d}{2}}}e^{-\frac{|x|^{2}}{2t}}\label{eq:Heat-Kernel}
\end{equation}
and the associated stochastic process is the classical Brownian motion
in $\mathbb{R}^{d}$. Notice that the heat kernel $v(x,t)$ has the
following long-time behavior 
\[
v(x,t)\sim Ct^{-d/2},\;\mathrm{as}\;t\to\infty.
\]

We are interested in studying the long-time behavior of the subordination
of the solution $v(x,t)$ by the density $G_{t}(\tau)$, that is,
\begin{equation}
v^{E}(x,t):=\int_{0}^{\infty}v(x,\tau)G_{t}(\tau)\,d\tau=\frac{1}{(2\pi)^{\nicefrac{d}{2}}}\int_{0}^{\infty}\tau^{-\nicefrac{d}{2}}e^{-\frac{|x|^{2}}{2\tau}}G_{t}(\tau)\,d\tau.\label{eq:TC-vE1}
\end{equation}
Then $v^{E}(x,t)$ is the fundamental solution of the general fractional
time differential equation, that is, 
\begin{equation}
\begin{cases}
{\displaystyle \mathbb{D}_{t}^{(k)}u(x,t)} & =\frac{1}{2}\Delta u(x,t)\\
u(x,0) & =\delta(x).
\end{cases}\label{eq:FT-Hequation}
\end{equation}
Here $\mathbb{D}_{t}^{(k)}$ are differential-convolution operators
defined, for any nonnegative kernel $k\in L_{\mathrm{loc}}^{1}(\mathbb{R}_{+})$,
by 
\begin{equation}
\big(\mathbb{D}_{t}^{(k)}u\big)(t):=\frac{d}{dt}\int_{0}^{t}k(t-\tau)u(\tau)\,d\tau-k(t)u(0),\;t>0.\label{eq:general-derivative-1}
\end{equation}
(See \cite{Kochubei11} for more details and examples.)

In order to study the time evolution of $v^{E}(x,t)$, one possibility
is to define its Cesaro mean 
\[
M_{t}\big(v^{E}(x,t)\big):=\frac{1}{t}\int_{0}^{t}v^{E}(x,s)\,ds,
\]
which may be written as 
\begin{equation}
M_{t}\big(v^{E}(x,t)\big)=\int_{0}^{\infty}v(x,\tau)M_{t}\big(G_{t}(\tau)\big)d\tau.\label{eq:Cesaro-mean-vE}
\end{equation}
The long-time behavior of the Cesaro mean $M_{t}\big(v^{E}(x,t)\big)$
was investigated in \cite[Sec.~3]{KKdS19} for the three classes of
admissible kernels and $d\ge3$. The method was based on the ratio
Tauberian theorem from \cite{Li2007}. In the next section we use
an alternative method to find the long-time behavior of the Cesaro
mean $M_{t}\big(v^{E}(x,t)\big)$.

\section{Alternative Method for Subordinated Heat Kernel}

\label{sec:Alternative-Method}The Laplace transform method is based
on the result of Lemma~\ref{lem:t-LT-G} wherein the $t$-Laplace
transform of the subordination $v^{E}(x,t)$ is explicitly given by
\begin{align}
(\mathscr{L}v^{E}(x,\cdot))(\lambda) & =C\int_{0}^{\infty}\tau^{-\nicefrac{d}{2}}e^{-\frac{|x|^{2}}{2\tau}}(\mathscr{L}G_{\cdot}(\tau))(\lambda)\,d\tau\nonumber \\
 & =C\mathcal{K}(\lambda)\int_{0}^{\infty}\tau^{-d/2}e^{-\frac{|x|^{2}}{2\tau}-\tau\lambda\mathcal{K}(\lambda)}\,d\tau.\label{eq:LT-vE}
\end{align}
The integral in \eqref{eq:LT-vE} is computed according to the formula,
\[
\int_{0}^{\infty}\tau^{-d/2}e^{-\frac{a}{\tau}-b\tau}\,d\tau=\begin{cases}
\frac{\sqrt{\pi}e^{-2\sqrt{ab}}}{\sqrt{b}}, & d=1,\\
2K_{0}\left(2\sqrt{ab}\right), & d=2,\\
2\left(\frac{a}{b}\right)^{(2-d)/4}K_{d/2-1}\left(2\sqrt{ab}\right), & d\ge3.
\end{cases}
\]
(see for instance \cite[page~146, eqs.~(27), (29)]{erdelyi-I-54}),
where $a=\frac{|x|^{2}}{2}$, $b=\lambda\mathcal{K}(\lambda)$, and
$K_{\nu}(z)$ is the modified Bessel function of the second kind \cite[Sec.~9.6]{AS92}.
The asymptotic of the Bessel function $K_{\nu}(z)$ as $z\to0$ is
well known (e.g., see \cite[Eqs.~(9.6.8) and  (9.6.9)]{AS92}) and
is given by 
\begin{align}
K_{0}(z) & \sim-\ln(z),\label{eq:Kzero-asympt}\\
K_{\nu}(z) & \sim\frac{1}{2}\Gamma(\nu)\left(\frac{z}{2}\right)^{-\nu}\sim Cz^{-\nu},\quad\Re(\nu)>0.\label{eq:K-asympt}
\end{align}
With these explicit formulas, we study each class (C1), (C2), and
(C3) separately. 
\begin{description}
\item [{(C1).}] For this class, $\mathcal{K}(\lambda)=\lambda^{\theta-1}$,
$0<\theta<1$. 

\begin{enumerate}
\item For $d=1$ , we obtain 
\[
(\mathscr{L}v^{E}(x,\cdot))(\lambda)=C\lambda^{-1+\theta/2}e^{-\sqrt{2}|x|\lambda^{\theta/2}}=\lambda^{-(1-\theta/2)}L\left(\frac{1}{\lambda}\right),
\]
where $L(y)=Ce^{-\sqrt{2}|x|y^{-\theta/2}}$ is a SVF. An application
of the Karamata Tauberian theorem(see for example \cite[Sec.~2.2]{Seneta1976}
or \cite[Sec.~1.7]{Bingham1987}) gives 
\[
M_{t}\big(v^{E}(x,t)\big)\sim Ct^{-\theta/2}e^{-\sqrt{2}|x|t^{-\theta/2}}\sim Ct^{-\theta/2},\quad t\to\infty.
\]
\item For $d=2$, we have 
\[
(\mathscr{L}v^{E}(x,\cdot))(\lambda)=C\lambda^{-(1-\theta)}K_{0}(\sqrt{2}|x|\lambda^{\theta/2})=\lambda^{-(1-\theta)}L\left(\frac{1}{\lambda}\right),
\]
where $L(y)=CK_{0}(\sqrt{2}|x|y^{-\theta/2})$ is a SVF. Invoking
the Karamata Tauberian theorem and \eqref{eq:Kzero-asympt} yields,
for $t\to\infty$, 
\[
M_{t}\big(v^{E}(x,t)\big)\sim Ct^{-\theta}K_{0}\big(\sqrt{2}|x|t^{-\theta/2}\big)\sim Ct^{-\theta}\ln\big(\sqrt{2}|x|t^{-\theta/2}\big).
\]
\item For $d\ge3$, the Laplace transform of $v^{E}(x,t)$ has the form
\begin{align*}
(\mathscr{L}v^{E}(x,\cdot))(\lambda) & =C|x|^{(2-d)/2}\lambda^{-(1-\theta)}\left(\frac{1}{\lambda}\right)^{\theta(2-d)/4}K_{\frac{d}{2}-1}(\sqrt{2}|x|\lambda^{\theta/2})\\
 & =\lambda^{-(1-\theta)}L\left(\frac{1}{\lambda}\right),
\end{align*}
where $L(y)=C|x|^{(2-d)/2}y^{\theta(2-d)/4}K_{\frac{d}{2}-1}(\sqrt{2}|x|y^{-\theta/2})$
is a SVF. It follows from the Karamata Tauberian theorem and \eqref{eq:K-asympt}
that 
\[
M_{t}\big(v^{E}(x,t)\big)\sim Ct^{-\theta}L(t)\sim C|x|^{(\theta+1)(2-d)/2}t^{-\theta},\quad t\to\infty.
\]
\end{enumerate}
\item [{(C2).}] Here we have $\mathcal{K}(\lambda)\sim\lambda^{-1}L(\lambda^{-1})$
as $\lambda\to0$, where $L(y)=\mu(0)\log(y)^{-1}$, $\mu(0)\neq0$.
Again we study the three different cases $d=1$, $d=2$ and $d\ge3$.

\begin{enumerate}
\item For $d=1$, the $t$-Laplace transform of $v^{E}(x,t)$ can be written,
for $\lambda\to0$, as 
\begin{align*}
(\mathscr{L}v^{E}(x,\cdot))(\lambda) & =C\lambda^{-1}\log(\lambda^{-1})^{-1/2}e^{-\sqrt{2\mu(0)}|x|\log(\lambda^{-1})^{-1/2}}\\
 & =\lambda^{-1}L\left(\frac{1}{\lambda}\right),
\end{align*}
where $L(y)=C\log(y)^{-1/2}e^{-\sqrt{2\mu(0)}|x|\log(y)^{-1/2}}$
is a SVF. An application of the Karamata Tauberian theorem gives 
\[
M_{t}\big(v^{E}(x,t)\big)\sim CL(t)\sim C\log(t)^{-1/2}e^{-\sqrt{2\mu(0)}|x|\log(t)^{-1/2}},\quad t\to\infty.
\]
\item If $d=2$, we have 
\begin{align*}
(\mathscr{L}v^{E}(x,\cdot))(\lambda) & =C\lambda^{-1}\log(\lambda^{-1})^{-1}K_{0}\left(\sqrt{2\mu(0)}|x|\log(\lambda^{-1})^{-1/2}\right)\\
 & =\lambda^{-1}L\left(\frac{1}{\lambda}\right),
\end{align*}
where $L(y)=C\log(y)^{-1}K_{0}\left(\sqrt{2\mu(0)}|x|\log(y)^{-1/2}\right)$
is a SVF. As $t\to\infty$ then the Karamata Tauberian theorem and
\eqref{eq:Kzero-asympt} we obtain 
\[
M_{t}\big(v^{E}(x,t)\big)\sim CL(t)\sim C\log(t)^{-1}\ln\big(\sqrt{2\mu(0)}|x|\log(t)^{-1/2}\big).
\]
\item For $d\ge3$, it follows that, as $\lambda\to0$, 
\begin{eqnarray*}
(\mathscr{L}v^{E}(x,\cdot))(\lambda) & = & C|x|^{(2-d)/2}\lambda^{-1}\log(\lambda^{-1})^{-1+(2-d)/4}\\
 &  & \times K_{\frac{d}{2}-1}\left(C'|x|\log(\lambda^{-1})^{-1/2}\right)\\
 & = & \lambda^{-1}L\left(\frac{1}{\lambda}\right),
\end{eqnarray*}
where $L(y)=C|x|^{(2-d)/2}\log(y)^{-1+(2-d)/4}K_{\frac{d}{2}-1}\left(C'|x|\log(y)^{-1/2}\right)$
is a SVF. To verify that $L(y)$ is a SVF, one may note that $\log(y)^{-1+(2-d)/4}$
as well as is $K_{\frac{d}{2}-1}\left(C'|x|\log(y)^{-1/2}\right)$
according to \eqref{eq:K-asympt}; the stated result then follows
from Prop.~1.3.6 in \cite{Bingham1987}. It follows from the Karamata
Tauberian theorem and \eqref{eq:K-asympt} that 
\[
M_{t}\big(v^{E}(x,t)\big)\sim CL(t)\sim C|x|^{2-d}\log(t)^{-1},\quad t\to\infty.
\]
\end{enumerate}
\item [{(C3).}] We now have $\mathcal{K}(\lambda)\sim C\lambda^{-1}L(\lambda^{-1})^{-1-s}$,
as $\lambda\to0$ and $s>0$, $C>0$.

\begin{enumerate}
\item For $d=1$, the $t$-Laplace transform of $v^{E}(x,t)$ can be written,
for $\lambda\to0$, as 
\begin{align*}
(\mathscr{L}v^{E}(x,\cdot))(\lambda) & =C\lambda^{-1}\log(\lambda^{-1})^{-(1+s)/2}e^{-C'\sqrt{2}|x|\log(\lambda^{-1})^{-(1+s)/2}}\\
 & =\lambda^{-1}L\left(\frac{1}{\lambda}\right),
\end{align*}
where $L(y)=C\log(y)^{-(1+s)/2}e^{-C'\sqrt{2}|x|\log(y)^{-(1+s)/2}}$
is a SVF, as is easily seen. An application of the Karamata Tauberian
theorem gives, as $t\to\infty$ 
\[
M_{t}\big(v^{E}(x,t)\big)\sim CL(t)\sim C\log(t)^{-(1+s)/2}e^{-C'\sqrt{2}|x|\log(t)^{-(1+s)/2}}.
\]
\item For $d=2$, we have 
\begin{align*}
(\mathscr{L}v^{E}(x,\cdot))(\lambda) & =C\lambda^{-1}\log(\lambda^{-1})^{-1-s}K_{0}\left(C'|x|\log(\lambda^{-1})^{-(1+s)/2}\right)\\
 & =\lambda^{-1}L\left(\frac{1}{\lambda}\right),
\end{align*}
where 
\[
L(y)=C\log(y)^{-1-s}K_{0}\left(C'|x|\log(y)^{-(1+s)/2}\right)
\]
is a SVF. Use the Karamata Tauberian theorem and \eqref{eq:Kzero-asympt}
now yield the behavior as $t\to\infty$ 
\[
M_{t}\big(v^{E}(x,t)\big)\sim CL(t)\sim C\log(t)^{-1-s}\ln\big(C'|x|\log(t)^{-(1+s)/2}\big).
\]
\item For $d\ge3$, it follows that 
\begin{eqnarray*}
(\mathscr{L}v^{E}(x,\cdot))(\lambda) & = & C|x|^{(2-d)/2}\lambda^{-1}\log(\lambda^{-1})^{-(1+s)(1-(2-d)/4)}\\
 &  & \times K_{\frac{d}{2}-1}\left(C'\sqrt{2}|x|\log(\lambda^{-1})^{-(1+s)/2}\right)\\
 & = & \lambda^{-1}L\left(\frac{1}{\lambda}\right),
\end{eqnarray*}
as $t\to\infty$, where 
\[
L(y)=C|x|^{(2-d)/2}\log(y)^{-(1+s)(2+d)/4}K_{\frac{d}{2}-1}\left(C'\sqrt{2}|x|\log(y)^{-(1+s)/2}\right)
\]
is a SVF. We note that $L(y)$ is the product of two SVF's which is
a SVF (see Prop.~1.3.6 in \cite{Bingham1987}). It the follows from
the Karamata Tauberian theorem and \eqref{eq:K-asympt} that 
\[
M_{t}\big(v^{E}(x,t)\big)\sim CL(t)\sim C|x|^{2-d}\log(t)^{-1-s},\quad t\to\infty.
\]
\end{enumerate}
\end{description}
\begin{rem}[Gaussian convolution kernel]
\label{rem:Convolution-kernel}We consider the nonlocal operator
$\mathcal{L}$ on functions $u:\mathbb{R}^{d}\longrightarrow\mathbb{R}$
defined in integral form by 
\begin{equation}
(\mathcal{L}u)(x):=(a*u)(x)-u(x)=\int_{\mathbb{R}^{d}}a(x-y)[u(y)-u(x)]\,dy,\label{eq:operator}
\end{equation}
where the convolution kernel $a$ is non-negative, symmetric, bounded,
and integrable, i.e., 
\begin{equation}
a(x)\ge0,\qquad a(x)=a(-x),\qquad a(x)\in L^{\infty}(\mathbb{R}^{d})\cap L^{1}(\mathbb{R}^{d}).\label{eq:condition-a1}
\end{equation}
In addition, the kernel $a$ is a density in $\mathbb{R}^{d}$ with
finite second moment; explicitly 
\begin{equation}
\langle a\rangle:=\int_{\mathbb{R}^{d}}a(x)\,dx=1,\quad\int_{\mathbb{R}^{d}}|x|^{2}a(x)\,dx<\infty.\label{eq:condition-a2}
\end{equation}
Since $\mathcal{L}$ is a bounded operator in $L^{2}(\mathbb{R}^{d})$,
its heat semigroup $e^{t\mathcal{L}}$ can be easily computed by using
the exponential series according to 
\[
e^{t\mathcal{L}}=e^{-t}e^{ta\ast}=e^{-t}\sum_{k=0}^{\infty}t^{k}\frac{a^{\ast k}}{k!}=e^{-t}\mathrm{Id}+e^{-t}\sum_{k=1}^{\infty}t^{k}\frac{a^{\ast k}}{k!}.
\]
By removing the singular part $e^{-t}\mathrm{Id}$ of the heat semigroup,
we obtain the \emph{regularized} heat kernel 
\begin{equation}
v(x,t)=e^{-t}\sum_{k=1}^{\infty}t^{k}\frac{a^{\ast k}\left(x\right)}{k!}\label{eq:v-NL-CP}
\end{equation}
with the source at the origin. In other words, for any $f\in L^{2}(\mathbb{R}^{d})$,
a solution to the nonlocal Cauchy problem 
\begin{equation}
\begin{cases}
{\displaystyle \frac{\partial u(x,t)}{\partial t}} & =\mathcal{L}u(x,t),\\
u(x,0) & =f(x),
\end{cases}\label{eq:NL-CP}
\end{equation}
has the form $u(x,t)=e^{-t}f(x)+(v\ast f)(x,t)$ with $v$ given by
\eqref{eq:v-NL-CP}. In particular, the fundamental solution of the
problem \eqref{eq:NL-CP} is 
\[
u(x,t)=e^{-t}\delta(x)+v(x,t).
\]
For any $r>0$, if $|x|\le rt^{\nicefrac{1}{2}}$, then the following
asymptotic for $v(x,t)$ holds as $t\to\infty$ 
\begin{equation}
v(x,t)=\frac{1}{(4\pi t)^{\nicefrac{d}{2}}}e^{-\frac{|x|^{2}}{4t}}(1+o(t^{-\nicefrac{1}{4}})),\label{eq:asymp-v-gaussian-a}
\end{equation}
see \cite[Thm.~2.1]{Grigoryan2018}. If we denote by $v^{E}(x,t)$
the subordination of $v(x,t)$ by the density $G_{t}(\tau)$, then
the Cesaro mean of $v^{E}(x,t)$ has long time behavior given as above
for the different classes of admissible kernels $k$. 

\end{rem}

\subsection*{Acknowledgement}

Financial support from FCT--Funda{\c c\~a}o para a Ci{\^e}ncia
e a Tecnologia through the project UID/MAT/04674/2019 (CIMA Universidade
da Madeira) is gratefully acknowledged. The work of the first-named
author was funded in part by the budget program of Ukraine No. 6541230
``Support to the development of priority research trends''. It was
also supported in part under the research work ``Markov evolutions
in real and p-adic spaces\textquotedbl{} of the Dragomanov National
Pedagogical University of Ukraine. Finally, we would like to thanks
the anonymous referee of the first draft of the paper to point out
some misprints, valuable comments and suggestions that have led to
a significant improvement of the paper.

\end{document}